\lstdefinelanguage{aslanpp}{
  basicstyle=\scriptsize\ttfamily, 
  breakatwhitespace=false,
  mathescape=true,
   numbers=left,xleftmargin=2em,framexleftmargin=1.5em,
   numberstyle=\tiny,
  morekeywords={specification,channel_model,CCM,ICM,ACM,entity,on,else,import,types,symbols,nonpublic,noninvertible,macros,clauses,equations,body,breakpoints,new,any,where,send,receive,over,retract,if,then,elseif,while,select,on,assert,constraints,goals,forall,exists,Actor,for} 
}
\newcommand{\fix}[2]{{\bf FIX}\footnote{{\bf #1:} #2 }}
\renewcommand{\fix}[2]{}
\newcommand{\sqli}{SQLi}
\newcommand{\bbb}{BB}
\newcommand{\tb}{TB}
\newcommand{\eb}{EB}
\newcommand{\uq}{UQ}
\newcommand{\sq}{SQ}
\newcommand{\so}{SO}
\newcommand{\sqlfast}{SQLfast}
\newcommand{\atse}{CL-AtSe}
\newcommand{\joo}{Joomla!}
\newcommand{\chained}{YAVWA}
\newcommand{\ifarrow}[1][]{
  \ifthenelse{\equal{#1}{}}{
    \Rightarrow
  }{
    =\!\!\![{#1}]\!\!\hspace{1pt}\!\Rightarrow
  }}
\newcommand{\Paragraph}[1]{\smallskip\emph{#1}}
\newcolumntype{s}{>{\hsize=1.2\hsize}X}
\newcolumntype{b}{>{\hsize=0.5\hsize}X}
\newcolumntype{Z}{>{\hsize=0.5\hsize\centering\arraybackslash}X}
\begin{document}
\begin{SHORT}
\title{\vspace*{-1cm}
Formal Analysis of Vulnerabilities of Web Applications Based on SQL Injection\thanks{This work was carried out while Marco Rocchetto was at the Universit\`a di Verona.}
\vspace*{-0.6cm}}
\end{SHORT}
\begin{EXT}
\title{
Formal Analysis of Vulnerabilities of Web Applications Based on SQL Injection\thanks{This work was carried out while Marco Rocchetto was at the Universit\`a di Verona.} \\ (Extended Version)
}
\end{EXT}

\begin{SHORT}
\author{Federico De Meo\inst{1} \and Marco Rocchetto\inst{2} \and Luca Vigan\`o\inst{3}
}
\institute{Dipartimento di Informatica, Universit\`a degli Studi di Verona, Italy \and
iTrust, Singapore University of Technology and Design, Singapore \and
Department of Informatics, King's College London, UK}
\end{SHORT}

\begin{EXT}
\author{Federico De Meo\inst{1} \and Marco Rocchetto\inst{2} \and Luca Vigan\`o\inst{3}
}
\institute{Dipartimento di Informatica, Universit\`a degli Studi di Verona, Italy \and
iTrust, Singapore University of Technology and Design, Singapore \and
Department of Informatics, King's College London, UK}
\end{EXT}



\maketitle
\begin{abstract}
We present a formal approach for the analysis of attacks that exploit SQLi
to violate security properties of web applications. We give a formal
representation of web applications and databases, and show that our
formalization effectively exploits SQLi attacks. We implemented our
approach in a prototype tool called SQLfast and we show its efficiency on
four real-world case studies, including the discovery of an attack on Joomla! that no other tool can find.
%
\end{abstract}

\section{Introduction}
\label{sec:intro}


\textbf{Motivations.\ } 
According to OWASP (the Open Web Applications Security
Pro\-ject~\cite{owasptop10}), \emph{SQL injection (\sqli{})} is the most
critical threat for the security of web applications (\emph{web apps}, for
short), and MITRE lists improper \sqli{} neutralization as the most
dangerous programming error~\cite{top25}. \sqli{} was first defined
in~\cite{sqliphrack} but, also due to the increasing complexity of web
apps, \sqli{}s can still be very difficult to detect, especially by manual
\emph{penetration testing} (\emph{pentesting}). 

A number of \sqli{} scanners have thus been developed to search for injection
points and payloads, most notably \emph{sqlmap}~\cite{sqlmap}, which allows
human pentesters to find \sqli{} vulnerabilities by testing
the web app with different payloads, and \emph{sqlninja}~\cite{sqlninja}, which
focuses on SQL server databases. The combination of the two provides the
pentester with a powerful tool suite for \sqli{} detection.
However, neither sqlmap nor sqlninjia (nor other state-of-the-art
vulnerability scanners) are able to detect vulnerabilities linked to
logical flaws of web apps~\cite{johnny}.
This means that even if a scanner can concretely discover a \sqli{}, it
can't link \sqli{} to lo\-gical flaws that lead to the violation of a
generic security property, e.g., the secrecy of data accessible only
bypassing an authentication phase via a \sqli{}.

Moreover, determining that 
a web app is vulnerable to
\sqli{} (and which 
payload to exploit) might not be enough for 
the app's overall security. Consider, for instance, a web app
that relies on legacy code (when an update is not feasible, e.g., 
because the legacy code is a core part of the system). If a
\sqli{} is found, an investigation should be performed to understand when
the \sqli{} can be exploited and whether this compromises security.
This investigation is carried out manually by the 
pentester in charge of 
identifying 
attack scenarios, thus potentially leading to additional omissions, errors and oversights in the security analysis.

A number of formal approaches for the security analysis of web apps, based
on the \emph{Dolev-Yao (DY) intruder model}~\cite{dolev1983}, have been
implemented recently, e.g.,
\cite{towards,avantssar-tacas,spacite,csrf,spacios}. However, the DY model
is typically used to reason about security protocols and the cryptographic
operators they employ (e.g., for asymmetric or symmetric cryptography,
modular exponentiation or exclusive-or) but abstracting away the contents
of the payloads of the messages. As a consequence, these approaches cannot
properly identify or exploit new \sqli{} payloads since reasoning about
the contents of the messages is crucial to that end.

\textbf{Contributions.\ } In this paper, we 
present a formal approach for the analysis of attacks that exploit \sqli{}
to violate security properties of web apps. We define how to formally
represent web apps that interact with a database and how the DY intruder
model can be extended to deal with \sqli{}.

In order to show that our formalization can effectively be used to detect
security vulnerabilities linked to \sqli{} attacks, we have developed a
prototype tool called \emph{\sqlfast{} (SQL Formal AnalysiS Tool)} and we
show its efficiency by discussing four real-world case studies. Most notably, we use \sqlfast{} to detect an attack on \joo{} which, to the
best of our knowledge, no state-of-the-art \sqli{} scanner (e.g., sqlmap
or sqlninja) can detect since they do not automatically link different
attacks in one attack trace (i.e., they do not find logical flaws linked
to \sqli{} attacks). Another key novel aspect of \sqlfast{} is that it can
detect complex attacks in which 
a first \sqli{} attack provides data for a second subsequent attack.
We show that \sqlfast{} allows us to exploit \sqli{} combining it with
logical flaws of web apps to report sophisticated attack traces in a few
seconds and can also deal with Second-Order \sqli{}s, which are notoriously
difficult to spot.



Note that we do not search for new \sqli{} payloads but rather we exploit
attacks related to \sqli{}. This allows us to analyze how an intruder can
violate a security property by exploiting one or more attacks related to a
\sqli{}, e.g., credential bypass. Nevertheless, we also (automatically)
test our attacks against the web app under analysis and then we use
state-of-the-art tools (i.e., sqlmap and curl) to detect the actual
payload of all the \sqli{}s exploited.


\textbf{Organization.\ }
In \autoref{sec:joomla}, we discuss a concrete example that shows why we can't stop at the identification of a \sqli{}. 
In \autoref{sec:sqli}, we give
a categorization of \sqli{} vulnerabilities\begin{EXT} to highlight the main aspects that characterize \sqli{}\end{EXT}, based on which, in \autoref{sec:formalization},
we provide our formalization. In \autoref{sec:casestudiesandresults}, we
discuss 
\sqlfast{} and 
its application
real-world case studies. 
\begin{SHORT}
In \autoref{sec:conclusions}, we sum up and discuss related and future work. 
The 
extended version~\cite{sqliextended} contains full details on our specifications and case studies, and a proof that the formalization of the database correctly handles all the SQLis categorized in \autoref{sec:sqli}.
\end{SHORT}
\begin{EXT}
We discuss related work in \autoref{sec:related}  and conclude in \autoref{sec:conclusions}. 
The appendix contains a proof that the formalization of the database correctly
handles all the \sqli{s} categorized in \autoref{sec:sqli}, as well as full
details on our specifications and case studies.
\end{EXT}


\section{Why can't we stop at the identification of a \sqli{}? The case of Joomla!}
\label{sec:joomla}
The identification of a \sqli{} entry point is generally considered as a
satisfactory finish line when dealing with \sqli{} in web apps. So, one
could ask: why not simply stop there (and why bother reading the rest of
this paper)? The answer is that a \sqli{} can be a serious threat
\emph{only} if it can be exploited and \emph{only} if it can be used for
carrying out an attack. A full understanding of how a potential \sqli{}
vulnerability can afflict the security of a web app is essential in order
to implement proper countermeasures. For instance, consider
\joo{}~\cite{Joomla}, a PHP-based Content Management System that allows
users to create web apps through a web interface. \joo{} supports
different databases, e.g., MySQL~\cite{MySQL} and
PostgreSQL~\cite{PostgreSQL}, and a recent assessment~\cite{joomlaattack}
has shown that versions ranging from 3.2 to 3.4.4 suffer from a \sqli{}
vulnerability~\cite{CVE-2015-7857}.

The execution of a state-of-the-art scanner such as sqlmap on \joo{} can
correctly find the vulnerability. However, sqlmap (or any other scanner for
\sqli{}) cannot tell how that \sqli{} can be usefully exploited in order
to carry out a concrete attack. A general description of the consequences
of a \sqli{} attack is given in~\cite{sqli:description,owasp:sqli} but, whenever a
\sqli{} entry point is found, the penetration tester has to manually
investigate the kind of damages that \sqli{} might cause to the web app.
The researchers who discovered the vulnerability of \joo{}~\cite{joomlaattack} 
also described how it could be exploited in a real attack: it
would allow an intruder to perform a session hijack and thus steal
someone's session but would not allow him to create his own account or
modify arbitrary data on the database. The exploration of different attack
scenarios has been entirely performed manually since no automatic tool
shows the outcome of the exploitation of a \sqli{} vulnerability on a
specific web app. But who guarantees that a post-\sqli{} attack can actually be performed and that all possible attacks based on the \sqli{} have been taken into account by the penetration tester?

This is why we can't stop at the identification of a \sqli{} and why we
can't address the post-\sqli{} attacks with a manual analysis. Our
approach addresses this by automating the identification of attacks that
exploit a \sqli{}.

\section{SQL injections}
\label{sec:sqli}
Some general classifications based on the payloads of the \sqli{}
(and the exploitation scenarios) have been put forth, e.g.,
\cite{halfond06mar,owasptop10}. Based on these, we can divide \sqli{}
techniques into 6 different categories: (i) Boolean-Based, (ii)
Time-Based, (iii) Error-Based, (iv) \texttt{UNION} Query, (v) Second-Order and (vi)
Stacked Queries.

Given that our formalization strictly depends on the attack that the
intruder wants to perform by using a particular type of \sqli{}, we now
define the two attacks that we have considered:\footnote{Other possible attacks (e.g., by exploiting a \emph{Cross-Site Scripting (XSS)} inside the payload of some \sqli{}) are outside the scope of our approach for now, cf.~\autoref{sec:conclusions}.}
\begin{compactitem}
\item \textit{Authentication bypass attack}: the intruder bypasses an
authentication check that a web app performs by querying a database.
\item \textit{Data extraction attack}: the intruder obtains data from the 
database that he should not be able to obtain.
\end{compactitem}
Based on these attack definitions, we will now describe the main details
of each category, emphasizing those aspects that are relevant for our
formalization. 
The following table summarizes which attacks can be exploited by a \sqli{} technique on a specific type of SQL query. 
\begin{EXT}Three remarks are in order: \end{EXT}
\begin{SHORT}Three remarks: \end{SHORT}
(1) since all state-of-the-art DBMS 
are vulnerable to \sqli{}, we won't distinguish between 
different dialects of SQL and simply write ``SQL query''; 
\begin{EXT}(2) for brevity, in the table we write \texttt{AB} for \emph{authentication bypass} and \texttt{DE} for \emph{data extraction};\end{EXT} 
\begin{SHORT}(2) \texttt{AB} abbreviates \emph{authentication bypass} and \texttt{DE} \emph{data extraction};\end{SHORT} 
(3) a scenario in which the intruder extracts information in order to bypass an authentication is considered to be a data extraction attack.

\begin{center}\scriptsize
\begin{tabularx} {\linewidth}[t] {s | Z  Z | Z Z | Z Z | Z Z | Z Z | Z Z} 
\multicolumn{1}{c}{} & \multicolumn{2}{c}{\bbb{}} & \multicolumn{2}{c}{\tb{}}  & \multicolumn{2}{c}{\eb{}}  &\multicolumn{2}{c}{\uq{}}  & \multicolumn{2}{c}{\so{}}  & \multicolumn{2}{c}{\sq{}} \\ \hline
                    & \texttt{AB}         & \texttt{DE}         & \texttt{AB}         & \texttt{DE}         & \texttt{AB}         & \texttt{DE}         & \texttt{AB}         & \texttt{DE}         & \texttt{AB}         & \texttt{DE}         & \texttt{AB}         & \texttt{DE}        \\  \hline
\texttt{SELECT}   & \checkmark & \checkmark & \checkmark & \checkmark & \checkmark & \checkmark &\checkmark & \checkmark &            &            &            & \checkmark \\ 
\texttt{UPDATE}   & \checkmark & \checkmark & \checkmark & \checkmark & \checkmark & \checkmark &           &            & \checkmark & \checkmark &            & \checkmark\\  
\texttt{DELETE}   &            &            &            &            &            & \checkmark &           &            &            &            &            & \checkmark\\  
\texttt{INSERT}   &            &            &            &            &            & \checkmark &           &            & \checkmark & \checkmark &            & \checkmark \\ \hline
\end{tabularx}
\end{center}

\begin{EXT}
\subsection{Boolean-Based \sqli{} (\bbb{})}
\label{sec:bbb}
\end{EXT}
\begin{SHORT}In a \textbf{Boolean-Based \sqli{} (\bbb{})},\end{SHORT}
\begin{EXT}In a \bbb{},\end{EXT}
an intruder inserts into an HTTP parameter, which is used by a web app
to write a SQL query,
one or more valid SQL statements that make the \texttt{WHERE}
clause of the SQL query evaluate to true or false. By interacting with the
web app and comparing the responses, the intruder can understand
whether or not the injection was successful. 
In this way, an intruder can perform both authentication bypass and
data extraction attacks.

In an authentication bypass attack, the intruder injects a statement that changes
the truth value of a \texttt{WHERE} clause in a SQL \texttt{SELECT} query, creating a
tautology. If a web app performs an authentication check querying a
database, this attack will then trick the database into replying in an
affirmative way even when no (or wrong) authentication details
have been presented by the intruder.

In a data extraction attack, the intruder obtains data from the database. The
term ``extraction'' is used in standard terminology but it can be misleading. 
With a \bbb{}, an intruder exploits the ``Boolean behavior'' of a
web app inferring whether the original query returned some tuples or
not. When the intruder understands how the web app behaves when some
tuples or 
no tuples are returned, he can start the ``extraction''. 
In this case, 
the intruder asks whether a certain information is
stored in the database and, based on the behavior of the web app, he
knows if the information is actually inside the database. 

\begin{EXT}
As an example,
consider a web app that presents a page \textit{p1} containing a
form. After the submission of the form, 
the web app
creates a query for the database searching for an entry that matches the
instantiation of the form fields submitted. 
Assume that the web app replies with \textit{p2} if a tuple is returned,
\textit{p1} otherwise. 
If the intruder injects a payload such as \texttt{or
username=admin} in one of the fields of the form and is redirected to
\textit{p2}, then he will know that \texttt{admin} is a valid value of the
database. 
\bigskip

\subsection{Time-Based \sqli{} (\tb{})}
\label{sec:tb}
\end{EXT}
\begin{SHORT}
A \textbf{Time-Based \sqli{} (TB)}\end{SHORT}
\begin{EXT}\tb{}\end{EXT}
is quite similar to \bbb{}:
the only difference is that \tb{} does not
need the web app to have a Boolean behavior. 
The intruder appends a timing function
to the validity value of a Boolean clause. 
Thus, after the submission of the query by the web app, the database waits for a predefined amount of time for a tuple as a response to the query;
the intruder can then infer whether the Boolean value of the
query was true or false observing a delay in the response.
\begin{EXT}
As an example, consider a web app that replies with a page
\textit{p1} independent of whether a tuple is returned by the database
(e.g., a search page). 
The intruder can 
inject a payload like \texttt{or
if(username=admin) wait 60s} and, if \texttt{admin} is a valid entry 
for the column \texttt{username}, the intruder will observe a delay of
around 60 seconds before receiving any answer. 
\end{EXT}
In real case scenarios, a \bbb{} is preferable 
as it is faster than a \tb{}. 
Timing is not part of our formalization (see \autoref{sec:formalization}), so
the abstract attack traces generated by our tool will not distinguish between
\bbb{} and \tb{}.

\begin{EXT}
\bigskip
\subsection{Error-Based \sqli{} (\eb{})}
\end{EXT}
When 
error pages are 
exposed to the Internet,
some error messages of the database could be exposed,
thus giving an intruder the possibility of exploiting an
\begin{SHORT} 
\textbf{Error-Based \sqli{} (EB)}.\end{SHORT}
\begin{EXT} 
\eb{}.\end{EXT}
In this type of injection, the intruder tricks the database
into performing operations that result in an error and then he extracts
information from the error messages produced by the database. 
EB is generally
used to perform a data extraction attack 
by inducing the generation of an error that contains some 
information stored in the database. 
\begin{EXT}
As an 
example, consider an intruder who wants to find out the
first username in the table \texttt{usernames}. He can inject, in
a login form, a payload that tricks the web app into evaluating the
query \texttt{SELECT * FROM (SELECT username FROM usernames LIMIT 1) AS tbl}. 
The web app generates 
\texttt{Error: table adminUsername unknown} because an
invalid table is selected (resulting from the inner query \texttt{SELECT
username FROM usernames LIMIT 1)}, where \texttt{adminUsername} is the first
username found in the user's table.
\bigskip

\subsection{\texttt{UNION} Query-Based \sqli{} (\uq)}
\label{sec:uq}
\end{EXT}

\begin{SHORT}
A \textbf{\texttt{UNION} Query-Based \sqli{} (\uq)}
\end{SHORT}
\begin{EXT}\uq{}\end{EXT}
is a technique 
in which an intruder injects a SQL \texttt{UNION}
operator to join the original query with a malicious one. 
The aim is to overwrite the values of the original query and thus, in
order to extract information, \uq{} requires the web app to print the
result of the query within the returned HTML page. This behavior allows
the intruder to actually extract information from the database by reading
it within the web app itself.
\begin{EXT}
As an 
example, consider the query \texttt{SELECT nickname FROM
users WHERE id=\$id}. An intruder can inject \texttt{\$1 UNION ALL SELECT
creditCardNumber FROM CreditCardTable} as id and thus obtain 
all the credit card numbers.
\bigskip

\subsection{Second-Order \sqli{} (\so{})}
\end{EXT}
\label{sec:secondOrder}

\begin{EXT}\so{}\end{EXT}
\begin{SHORT}
\textbf{Second-Order \sqli{} (\so{})}
\end{SHORT}
is an injection that 
has no direct effect when
submitted but that is exploited in a second stage of the attack. In some
cases, a web app may correctly handle and store a SQL statement whose
value depends on the user input. Afterwards, another part of the web
app that doesn't implement a control against \sqli{} might use the
previously stored SQL statement to execute a different query and thus expose
the web app to a \sqli{}. 
Automated 
scanners generally fail
to detect this type of \sqli{} (e.g., \cite{sqlmap,sqlninja}) and may need to
be manually instructed to check for evidence that an injection has been
attempted.
\begin{EXT}
As an example, consider a web app in which there is a user with
username \texttt{admin} (stored into the database). If this web app
has a registration page that allows special characters in the username, 
an intruder can register a new user with username \texttt{admin'\#} (where we
assume that \texttt{\#} is the comment delimiter character) and log in as
\texttt{admin'\#}. If the intruder changes the password of the user
\texttt{admin'\#}, 
a query like \texttt{UPDATE users SET password='123'
WHERE username='admin'\#'} is executed. 
The \emph{Database Management System (DBMS)} will interpret the \texttt{WHERE} clause as
\texttt{username='admin'} (because everything after the \texttt{\#} 
is considered a comment), so that the intruder succeeds in changing the
password of the admin user.
\bigskip

\subsection{Stacked Queries \sqli{} (\sq{})}

With a \sq{},\end{EXT}

\begin{SHORT}
With a \textbf{Stacked Queries \sqli{} (\sq{})},\end{SHORT}
an intruder can 
execute an arbitrary query different from the original one.
The semicolon character \texttt{;} enables the intruder to concatenate a
different SQL query to the original one.
By doing so, the intruder can perform data extraction attacks as well as
execute whatever operation is allowed by the database.
%
With a \sq{}, an
intruder can perform any of the \sqli{}s described above.
Thus, whenever we refer to all the \sqli{}s in our categorization, we
exclude \sq{} as it is already covered by the other ones.

\textbf{Prevention techniques.}
Avoiding \sqli{} attacks is theoretically quite straightforward. 
In fact, 
developers can use sanitization functions
or prepared statements.  
Roughly speaking, the general idea is to not evaluate the injected string as a SQL command. 

A \emph{sanitization function} takes the input provided by the user and removes (i.e., escapes) all the 
special characters that could be used to 
perform a \sqli{}. 
Sanitization functions are not the best option when dealing with 
\sqli{} because 
they might not be properly implemented or do not consider some cases.

\emph{Prepared statements} 
are the best option for preventing 
\sqli{}s. They are 
mainly used to execute the same query 
repeatedly maintaining 
efficiency. 
However, due to their inner execution principle
(if 
properly implemented) they are immune to \sqli{} attacks.
The execution of a prepared statement consists mainly in two steps:
preparation and execution. In the preparation step, the query is evaluated and 
compiled, waiting for the parameters for the instantiation. During the execution 
step, the parameters are submitted to the prepared statement and handled as data and 
thus they cannot be interpreted as SQL commands. 

\section{A Formalization of \sqli{}}
\label{sec:formalization}

We will now describe how we formally represent a web app that interacts
with a database using insecure SQL queries and/or a sanitized (i.e., secure) query.
In \autoref{sec:dy}, we propose an extension of the DY model that can deal with
\sqli{}.\footnote{This formal representation is intended to work with tools
		that perform symbolic analysis.
We don't formalize the honest client behavior and we assume the DY intruder to
be the only agent able to communicate with the web app. The DY intruder will
eventually perform honest interactions if needed to achieve a particular
configuration of the system. \begin{SHORT}See~\cite{sqliextended} for more
		details.\end{SHORT}}
We formalize the database in \autoref{sec:database}, the web app in
\autoref{sec:webapp}, and the goals in \autoref{sec:goals}. For brevity and readability, we omit many details and only give pseudo-code that should be quite intuitive. 
\begin{SHORT}See~\cite{sqliextended} for full details and the ASLan++ code of our formalizations and case studies, along with a brief introduction to ASLan++.\end{SHORT}
\begin{EXT}See the appendix for full details and the ASLan++ code of our formalizations and case studies, along with a brief introduction to ASLan++.\end{EXT}

\subsection{The DY web intruder}
\label{sec:dy}
We extend the standard DY intruder model~\cite{dolev1983} for 
security protocol analysis.
Suppose that we want to search for an 
authentication bypass attack via \bbb{}
(\autoref{sec:sqli}), in which
the intruder injects a statement that
changes the truth value of a \texttt{WHERE} clause in a SQL \texttt{SELECT} query, 
creating a tautology. 
To formalize this, we need to extend the DY intruder by giving him the
ability to send a concatenation of Boolean formulas made of conjunctions and
disjunctions. 
This characteristic highlights an
important difference between the classical DY intruder and the enhanced
version we are proposing: \emph{our web intruder works with abstract payloads
rather than messages}.
Due to technical details (e.g., implementation constraints and non-termination
problems), implementing such a modification is impractical.
We have thus 
allowed the intruder to
concatenate the exact payload, \textit{or true}, and defined a Horn clause to
model that whenever a formula has \textit{or true} injected by the intruder,
it 
evaluates to true.


We can rephrase the same reasoning in the case of \bbb{} for data extraction attacks, in which 
the intruder tricks the web app into asking to the database
if a particular information is present; 
for example,
instead of \texttt{or.true}, the intruder adds \texttt{or username=admin}. The
DBMS will reply in an affirmative way only if there is a tuple in the database
with \texttt{admin} as username.
To allow the intruder to perform all the \sqli{}s described in
\autoref{sec:sqli}, we thus extend the DY intruder with one constant
\texttt{sqli} that represents any \sqli{} payload (e.g., \texttt{or.true}). 


\subsection{The database}
\label{sec:database}
We give a general 
formalization of a database that can be used in any
specification to exploit \sqli{} when searching for security flaws in a web
app. 
Our formalization aims to be both \emph{compact}, 
to avoid state-space explosion problems, and \emph{general enough} not to be tailored to a given technology (e.g., MySQL or PostgreSQL). Hence, we don't represent the database content, the database structure, the SQL syntax nor access policies specified by the DBMS. Rather, we formalize messages sent and received and queries,
and a database can be seen as a network node that interacts only with the web
app through a secure channel.\footnote{Nothing prevents us from relaxing this
		assumption but this would give the DY intruder the possibility of
		performing attacks (e.g., man-in-the-middle attacks) that are rare in web
		app scenarios.} 
\begin{definition}
\emph{Messages} consist of \emph{variables} $V$, \emph{constants} $c$ ($\texttt{sqli}$,  
etc.), \emph{concatenation} $M.M$, \emph{function application} $f(M)$ of uninterpreted function symbols $f$ to messages $M$ (e.g., 
$\texttt{tuple}(M)$), and \emph{encryption} $\{M\}_M$ of messages with public, private or symmetric keys that are themselves messages.\begin{EXT}\footnote{In this paper we do not need to distinguish between the different kinds of encrypted messages, but we could of course do by following standard practice.} 
\end{EXT}
We define that $M_1$ is a \emph{submessage} of $M_2$ as is standard (e.g., $M_1$ is a submessage of $M_1.M_3$, of $f(M_1)$ and of $\{M_1\}_{M_4}$) and, abusing notation, write $M_1 \in M_2$.
\end{definition}

\begin{definition}
A query is \emph{valid} (respectively, \emph{not valid}) when, evaluated by a database, it returns one or more (respectively, zero) tuples.
%
\end{definition}
We formalize 
the validity 
of SQL queries by means of the Horn
clause:
\mbox{$\texttt{inDB(M.sqli)} \implies true$},
where, in order to represent a \sqli{} attack, the predicate \texttt{inDB()}
holds for a message (which represents a SQL query) whenever it is of the form
$M$.\texttt{sqli}. 
This states that the intruder has injected a payload \texttt{sqli} into the query parameters (expressed as a variable) $M$.

\Paragraph{Incoming messages.} 
We consider, as incoming messages, only SQL queries via raw SQL and via
sanitized queries. The parameters of queries are represented by a generic
variable \texttt{SQLquery}. In case of a raw SQL query, they are wrapped by an
uninterpreted function \texttt{query()}; if a sanitized query has been
implemented 
then we use another uninterpreted function
\texttt{sanitizedQuery()}. These two uninterpreted functions allow the modeler
to ``switch on/off'' 
the possibility of a \sqli{} in some point
of the app. 

\Paragraph{Database responses.} 
The tuple generated by the database as a response to a raw SQL query is
represented by an uninterpreted function \texttt{tuple()} over a message
representing a SQL query. Given that we do not model the content of the
database, this function represents any (and all) database data.

Whenever the database receives a SQL query \texttt{query(SQLquery)} from the
web app, the uninterpreted function \texttt{tuple(SQLquery)} is sent
back to the web app to express that a tuple, as a response to the
query, has been found. This response is returned only if
\texttt{inDB()} holds; in all other cases, a constant
\texttt{no\_tuple} is returned to represent that no tuples are returned in
the responses of the database. 

If the database receives a sanitized query, no injection is possible.
Hence, the database does not return any useful information to the
web app; instead, a constant \texttt{no\_tuple} is returned. Since 
the intruder 
cannot perform a \sqli{} in presence of a
sanitized query, we also assume that a sanitized query can be executed
only with legitimate parameters, i.e., as a function of \texttt{tuple()} (this
is because we are interested in modeling only \sqli{} scenarios).

The pseudo-code representing the database behavior is given in
\autoref{lst:databaseEntity}, where, here and in the following, we write
\texttt{DB} for the database.
\texttt{DB} is a network node and we assume it to be always actively
listening for incoming messages. 
It is defined by two main, mutually
exclusive, branches of an if-elseif statement: one guard is in line 1 in which
\texttt{DB} is waiting (expressed in Alice-and-Bob notation) for a sanitized
query and the other in line 3 in which it is waiting for a raw SQL query. If a
sanitized query is received, then there is no \sqli{}. Given that we only
consider dishonest interactions, the data sent back to the intruder will not
increase his knowledge. In other words, no \sqli{}s are permitted and any
permitted query will just give to the intruder the possibility of continuing
his execution with the web app but won't add any extra information to his
knowledge.
\lstset{numbers=left,xleftmargin=2em,framexleftmargin=1.5em}
\begin{lstlisting}[caption={Pseudo-code %representing
%of the behavior 
of a DBMS. %where %, for brevity, 
% (we write \texttt{DB} for the database).
},label={lst:databaseEntity}]
if(WebApp -> DB: sanitizedQuery(SQLquery)){
 if(SQLquery == tuple(*)) DB -> WebApp: no_tuple;
}elseif(WebApp -> DB: query(SQLquery)){
 if(inDB(SQLquery)) DB -> WebApp: tuple(SQLquery);
 if(!(inDB(SQLquery))) DB -> WebApp: no_tuple;}
\end{lstlisting}

One may argue that a valid query should indeed add extra information to the
intruder knowledge. However, we do not model the content of the database and
any information received by the intruder as a response to a sanitized query is
included in the action that the web app performs after this database response.
Thus, in our formalization, the query in \texttt{SQLquery} is not valid and
then the \texttt{no\_tuple} constant is sent back in line 2. We also add a
constraint in line 2 that any query received (\texttt{SQLquery}) must be of the
form \texttt{tuple(*)}, i.e., as a function of the content of the database
where \texttt{*} acts as a wildcard character that matches any possible
parameter. 
This is because, in the case of a sanitized query, the intruder cannot perform a \sqli{} and we 
exclude the case in which the DY intruder sends a random query just to continue the execution 
with the web app.
Instead, he has to either know a 
tuple of the database or data as functions of a tuple of the database. In the case the 
intruder knows \texttt{tuple(Query)}, he will just receive \texttt{no\_tuple}, i.e., 
correctly no data has been leaked 
to the intruder.

The second branch of the initial if-elseif statement (line 3) handles raw
queries. If a raw query is submitted, 
then there are two cases:
the raw query is not valid (line 5, where \texttt{!} formalizes the negation)
and then, as in the previous case, \texttt{no\_tuple} is sent back (line 5);
the raw query is valid (line 4) and a tuple is sent back (line 4). Given that
all these queries are sent from the intruder, we can assume they have a
malicious intent. One may argue that, in a real case scenario, the database is
not actually returning a tuple but, given that an intruder could repeatedly
send a \sqli{} exploiting that injection point, it is fair to assume that the
database is sending all the tuples it contains, i.e., \texttt{tuple(SQLquery)}.


\subsection{The web app}
\label{sec:webapp}
As for the database, the web app is a node of the network that can
send and receive messages. The web app communicates with a client or
with the database (it can, potentially, also communicate with other
apps but we do not consider that explicitly here). We assume only one
database is present because adding other databases would not add any further
useful information for finding attacks based on \sqli{}. The proof is
straightforward. 
Since we do not consider database contents and structures, if
we wanted to have two database models, then we would have two exact copies of
the formalization given in \autoref{sec:database}. Since we have assumed that
there exists a long-lasting secure relation between the database and the web
app, no man-in-the-middle attacks are considered. Therefore, any
attack found that involves the communication with one of the two databases
could be found by considering the other database only. 
%

A specification of a web app can be seen as
a behavioral description of the web app itself (along with its interaction
with the database). A modeler can define this specification from 
the design phase documentation of the
engineering process of the web app.
A model can also be created in a black box way by just looking at the HTTP
messages exchanged from a client and the web app and guessing the 
communication with the database.
%

We now consider the main aspects that allow for the modeling of 
a web app.

\Paragraph{Sending and receiving messages.}
A web app can communicate with a client and the database. 
We abstract away as many details as possible of the web pages and thus any incoming message will only contain:
(i) parameters of forms expressed as variables, e.g., \texttt{Client -> WebApp: Username.Password}, and
(ii) the web page itself expressed as a constant, e.g., \texttt{WebApp -> Client: dashboard} 
where \texttt{dashboard} 
represents a web page.
%
Note that, in any response of the web app, if the content of the response is linked to a response of the database, i.e., \texttt{tuple(Query)} (where the query is either \texttt{SELECT}, \texttt{UPDATE} or \texttt{DELETE}), then \texttt{tuple(Query)} must be included in the response. Otherwise, we would end up representing a scenario in which no content of the tuple received by the database is included in, or linked to, the web page 
and thus no \sqli{} would be present.

\Paragraph{Queries.} The web app creates either a sanitized query or a raw query.
Then, the web app wraps the variables representing the query parameters with either \texttt{sanitizedQuery()} or \texttt{query()}, both uninterpreted functions, and afterwards sends the SQL query to the database. Note that we only need to represent the parameters of a SQL query since we do not distinguish between different queries in the database formalization. 
If the query does not depend on parameters sent from a client, the intruder cannot exploit it to perform a SQLi.
The SQL query used to query the database is represented as a constant, resulting in the database always replying with \texttt{no\_tuple} (as \texttt{inDB()}, in this case, is never valid).

\Paragraph{If statements.}
We use them mainly to decide,
based on which kind of message has been received, what the web app
has to reply. For example, if the database replies with a tuple
\texttt{tuple(Query)}, then the web app might return a specific page along with
\texttt{tuple(Query)} or might return a different page.

\Paragraph{Assignments.}
A constant or a message can be assigned to a variable:
\texttt{Variable := constant|message}. 
Assignments are, e.g., useful to save incoming messages. 

\subsection{Goals}
\label{sec:goals}
Finally, we define the security properties we want the model to satisfy. As we discussed in \autoref{sec:sqli}, we consider two main attacks: authentication bypass and data extraction. We give the formalization in \autoref{lst:goals}, where \texttt{iknowledge} is a predicate that represents the knowledge of the intruder. 
By using the LTL ``globally'' operator \texttt{[]}, we
can specify an authentication check by stating that the intruder should not
have access to a specific page (\texttt{dashboard} in \autoref{lst:goals}), whereas data extraction is represented by
specifying that the intruder should not increase his knowledge with data from
the database (i.e., as function of \texttt{tuple()}).

\lstset{xleftmargin=0em,framexleftmargin=0em}
\begin{lstlisting}[numbers=none,caption={Authentication bypass and data extraction goals of the \bbb{} example.},label={lst:goals}]
[](!(iknowledge(dashboard))); %authentication bypass
[](!(iknowledge(tuple(*))));  %data extraction
\end{lstlisting}


\section{\sqlfast{}, Case Studies and Results}
\label{sec:casestudiesandresults}

To show that our formalization can be used effectively to detect security
flaws linked to 
\sqli{} attacks, 
we have developed \emph{\sqlfast{}}, a prototype 
\emph{SQL Formal AnalysiS Tool}~\cite{sqlfast}. 
In~\cite{sqlfast} we also provide a friendly web-based user interface
that helps the
modeler in 
creating the web-app model.
\sqlfast{}
\begin{SHORT}
		takes in input a specification written in AS\-Lan++, the modeling language of the AVANTSSAR Platform~\cite{avantssar-tacas},
and then calls \atse{} (one of the platform's model checkers) and generates
an \emph{Abstract Attack Trace (AAT)} as a \emph{Message Sequence Chart (MSC)}
if an attack was found. \sqlfast{} automatically detects which type of \sqli{}
was exploited and, in an interactive way, generates the curl or
sqlmap commands to concretize the attack.
\end{SHORT}
\begin{EXT}
		takes in input a specification written in
AS\-Lan++~\cite{aslan++}, the modeling language of the AVANTSSAR Platform
for security protocol analysis~\cite{avantssar-tacas}, and, by using the
ASLan++ translator, generates a transition system in the low-level
language ASLan~\cite{avantssar-deliverable-2.3update}. It then calls the
model checker \atse{}~\cite{atse} and generates an \emph{Abstract Attack
Trace (AAT)} as a MSC if an attack was found. \sqlfast{} automatically
detects which type of \sqli{} has been exploited and, in an interactive
way, generates the curl or sqlmap~\cite{sqlmap} commands to concretize the
attack 
reported.
\end{EXT}

%


As a concrete proof-of-concept, we have applied \sqlfast{} to (i) WebGoat~\cite{WebGoat}, (ii) Damn Vulnerable Web Application (DVWA)~\cite{dvwa}, (iii) \joo{} 3.4.4, and (iv) \emph{Yet Another Vulnerable Web Application (\chained{})}, an ad-hoc testing environment that we have developed and that also includes
a \so{} \sqli{} example.
\begin{SHORT}(Recall that full details are given in~\cite{sqliextended}.)\end{SHORT} The case studies provided by WebGoat and DVWA might sound
limited but 
capture all possible scenarios with respect to \sqli{}
attack combinations considered in this paper --- 
recall that our formalization for \sqli{} attacks does
not find \sqli{} payloads, but focuses on vulnerabilities based on \sqli{}. We tested SQLfast in order to show all the combinations that could be represented by considering \sqli{} for (1) authentication bypass, (2) data extraction and (3) data extraction with reuse of the extracted information. Our case studies are quite heterogenous, so it should not be difficult to map other case studies to one of these scenarios we have considered.

We have implemented the case studies in ASLan++ to be able to
apply the model checkers of the AVANTSSAR Platform (in particular, CL-AtSe), but other model checkers implementing the Dolev-Yao intruder model could be used as well, provided that their input language is expressive enough. For the sake of brevity, we discuss here only the case studies \joo{}, \chained{} and \so{}, which show how our formalization can find attacks linked to the logic of a web app that is vulnerable to \sqli{} attacks. The type of 
attacks that \sqlfast{} can detect and concretize cannot be 
detected by 
state-of-the-art tools for \sqli{} such as sqlmap.  

\subsection{Case Study: Authentication Bypass via Data Extraction}
\label{sec:authdata}
We now discuss two scenarios in which our approach detects attacks that
state-of-the-art tools, such as sqlmap, are \emph{not} able to detect and
exploit.  In the first scenario, the intruder exploits a recent \sqli{}
vulnerability found (by manual inspection only) in \joo{}~\cite{CVE-2015-7857}.
The second scenario (\chained{}) is a variant of the first 
and shows a concatenation of different attacks.

\Paragraph{\joo{}.}
A recent assessment has shown that the Content History module of \joo{} suffers from a \sqli{} vulnerability that allows a remote (non-authenticated) user to execute arbitrary SQL commands~\cite{CVE-2015-7857}.
The pseudo-code in~\autoref{lst:joomlamodel} represents the following
behavior: a remote user visits the Content History
component (line 1). The web app queries the database with the user
supplied data (2). If some tuples are generated (3), the web app sends
to the client the history page \texttt{viewHistory} along with the
\texttt{tuple()} function (4). The web app then has two possible ways of
authenticating the user (5--9): by using credentials or cookies. If
username and password are provided (5), the web app applies a
non-invertible hash function \texttt{hash()} to the password, and queries
the database to verify the credentials (6).\footnote{The web app applies a
hash function to the password before checking whether credentials are correct
because \joo{} stores the passwords hashed into the database.} If the
credentials are correct, the administration panel is sent to the user
(7). In case of a cookie session, the user provides a cookie that the
web app checks querying the database (8). If the cookie is valid,
the administration panel is sent back to the user (9).
%
%
\begin{lstlisting}[numbers=left,xleftmargin=2em,framexleftmargin=1.5em,breaklines=true,caption={Pseudo-code representing the \joo{} scenario.},
label={lst:joomlamodel}]
User -> WebApp: com_contenthistory.history.Listselect;
WebApp->DB: query(com_contenthistory.history.Listselect);
if(DB -> WebApp: tuple(SQLquery)){
 WebApp -> User: viewHistory.tuple(SQLquery); }
if(User -> WebApp: Username.Password){
 WebApp -> DB: sanitizedQuery(Username.hash(Password));
 if(DB -> WebApp: no_tuple){ WebApp -> User: adminPanel; }}
if(User -> WebApp: Cookie){ WebApp -> DB: sanitizedQuery(Cookie);
 if(DB -> WebApp: no_tuple){ WebApp -> User: adminPanel; }}
\end{lstlisting}
As goal, we check if there exists an execution
in which the intruder can access the administration panel represented by
the constant \texttt{adminPanel}.
\begin{lstlisting}[numbers=none,breaklines=true,caption={Authentication bypass for the \joo{} scenario.}
,label={lst:joomlagoal}]
[](!(iknowledge(adminPanel)));
\end{lstlisting}
%
\sqlfast{} generates the AAT in \autoref{lst:joomlaaat}, which 
is an
authentication bypass at\-tack where the intruder hijacks a user session by using a cookie instead of login credentials.  In fact, the web app applies a hash function to the password before verifying the credentials submitted by the user. The hash function would not allow an intruder to blindly submit a password extracted from the database, the only possibility is using a valid cookie value.\footnote{We do not consider the possibility of brute forcing the hashed password, in accordance with the perfect cryptography assumption of the DY model.}
The intruder 
performs a data extraction 
and retrieves the information 
to access the administration panel (1--4), and 
uses it 
to hijack a user session by submitting a valid cookie value (5--8).

\begin{lstlisting}[numbers=left,xleftmargin=2em,framexleftmargin=1.5em,breaklines=true,caption={Abstract attack trace that extracts data with a \sqli{} in order to bypass the authentication of the \joo{} scenario.}
,label={lst:joomlaaat}]
i -> WebApp  : com_contenthistory.history.sqli
WebApp -> DB : query(com_contenthistory.history.sqli)
DB -> WebApp : tuple(com_contenthistory.history.sqli)
WebApp  -> i : viewHistory.tuple(com_contenthistory.history.sqli)
i ->WebApp   : cookie.tuple(com_contenthistory.history.sqli)
WebApp -> DB : sanitizedQuery(tuple(com_contenthistory.history.sqli))
DB -> WebApp : no_tuple
WebApp  -> i : adminPanel
\end{lstlisting}

\Paragraph{\chained{}.}
\label{sec:chained}
We have designed a variant of \joo{} to show that
a \sqli{} can be exploited to compromise a part of a web app that does not 
directly depend on databases.
\chained{} 
provides 
an
HTTP form login and a login by HTTP basic 
authentication~\cite{rfc:basicauth} configured with the 
\texttt{.htaccess}~\cite{apache:htaccess} file. The credentials used for the
HTTP basic authentication, which are stored in the \texttt{.htpasswd} file, are
the same as the ones employed by the users to
login into the web app (i.e., the same as the ones stored in the
database).\begin{EXT}\footnote{One might see this as a bad practice and thus consider this whole example to be unrealistic. However, this scenario can easily be categorized as a security misconfiguration ($5^\textit{th}$ most critical security issue according 
to~\cite{owasptop10}). Moreover, recent	events~\cite{thehackernews} have shown, once again, how humans tend to reuse passwords across multiple logins.}\end{EXT}
The intruder's goal is to access the area protected by the HTTP 
basic authentication login. Obviously, 
he cannot perform a \sqli{} to
bypass HTTP basic authentication since 
the login procedure
doesn't use SQL. 
Bypassing
the login page, without knowing the correct credentials, 
doesn't allow
the intruder to gain access to the secure folder. 

We have defined this scenario in the pseudo-code in
\autoref{lst:chained}. The client sends his personal credentials
(\texttt{Username.Password}) to the web app (1). The web app
creates a query that it sends to the database (2) for verifying the
submitted credentials. If tuples are generated from the database (3), a
dashboard page is returned to the client along with the function 
\texttt{tuple()} (4), otherwise,
the web app redirects the user to
the login page (5).
At this point, the web app waits to receive correct credentials that will
allow the client to access the secure folder \texttt{secureFolder} (6).
Given that the credentials are the same as the ones stored in the
database, and the database content is represented with the function
\texttt{tuple()}, we can also represent credentials here with the function
\texttt{tuple()}.\footnote{We recall from \autoref{sec:database} that \texttt{tuple()}
represents an abstraction of any data that can be extracted from the
database. This means that whenever a web app requires any data 
in
the domain of the database, we can write them as a function of \texttt{tuple()}.}

\begin{lstlisting}[numbers=left,xleftmargin=2em,framexleftmargin=1.5em,breaklines=true,caption={Pseudo-code representing the \chained{} scenario.},
label={lst:chained}]
User -> WebApp: Username.Password;
WebApp -> DB: query(Username.Password);
if(DB -> WebApp: tuple(SQLquery)){
 WebApp -> User: dashboard.tuple(SQLquery);
}elseif(DB -> WebApp: no_tuple){ WebApp -> User: login; }
if(User -> WebApp: tuple(*)){ WebApp -> User: secureFolder; }
\end{lstlisting}
As goal, we check if the intruder can reach \texttt{secureFolder}.
%
\sqlfast{} generates the AAT given in \autoref{lst:aatChained}, in which 
the intruder 
successfully retrieves information from the database 
and 
uses such information to access a protected folder. 
The intruder performs a data extraction attack using \sqli{} 
(1--4), which allows him to retrieve information stored in the database, 
and then (5--6) submits the extracted data 
and accesses the restricted folder 
\texttt{secureFolder}.

\begin{lstlisting}[numbers=left,xleftmargin=2em,framexleftmargin=1.5em,breaklines=true,caption={Abstract attack trace
of the \chained{} case study.},
label={lst:aatChained}]
User -> WebApp: Username(4).sqli
WebApp -> DB  : query(Username(4).sqli)
DB -> WebApp  : tuple(Username(4).sqli)
WebApp -> i   : dashboard.tuple(Username(4).sqli)
i  -> WebApp  : tuple(Username(4).sqli)
WebApp  -> i  : secureFolder
\end{lstlisting}
\begin{SHORT}\vspace*{-0.5cm}\end{SHORT}

\subsection{Case Study: Second-Order \sqli{} (SO)}
\label{sec:secondorder}
We now show that our formalization is flexible enough to represent 
\so{}s, which are notoriously very difficult to detect and exploit.

This scenario is part of \chained{} and implements a web app that allows
users to register a new account. In the registration process, the web app
executes an (\texttt{INSERT}) SQL query that stores the user's credentials
into a database. The intruder can create an account submitting malicious
credentials that don't result in a \sqli{} but will trigger an injection
later on in the web app. After the registration phase, the user submits a
request for accessing an internal page. The web app performs another SQL
query using the same parameters previously used in the registration
process (i.e., the registration credentials). At this point, a page is
showed together with the injection and the intruder can exploit a \so{}.

We have formalized this scenario in~\autoref{lst:secondOrder}: a client
sends a registration request along with his personal credentials
(\texttt{Username} and \texttt{Password}) to the web app (1). The web app sends
a query containing the client's credentials to the database (2). The web app
checks if it receives a response from the database containing the data
resulting from the execution of the query \texttt{tuple(SQLquery)} submitted by
the web app (3). 
The web app sends back to the client the page \texttt{registered} (4). Here, the web app
does not forward \texttt{tuple()} 
because the registration
query is an \texttt{INSERT} (see \autoref{sec:database}).  The client asks for
a page (5), which makes the web app use previously submitted values of
\texttt{Username} and \texttt{Password} to execute a new SQL query (6). Here is
where the \so{} takes place; 
the variables embedded in the query in
(6) will trigger a \so{}.  The database executes the query and sends back the
results to the web app (7). Finally (8), the web app sends (by using a \texttt{SELECT} query) to the client the
requested page and the 
\texttt{tuple()}.\footnote{Recall that we don't represent SQL syntax in our
		models, so we don't explicitly represent the type of the SQL according to
		the modeling guidelines~in~\autoref{sec:webapp}.}

\begin{lstlisting}[numbers=left,xleftmargin=2em,framexleftmargin=1.5em,breaklines=true,caption={Pseudo-code representing
a web app vulnerable to a \so{} attack.},label={lst:secondOrder}]
User -> WebApp: registrationRequest.Username.Password;
WebApp -> DB: query(Username.Password);
if(DB -> WebApp: tuple(SQLquery)){
 WebApp -> User: registered;
 User -> WebApp: requestPage;
 WebApp -> DB: query(Username.Password);
 DB -> WebApp: tuple(SQLquery);
 WebApp -> User: page.tuple(SQLquery); }
\end{lstlisting}
As goal, we ask if 
the intruder can interact with the web app until he obtains data from the
database, i.e., with a data extraction attack, as in~\autoref{lst:goals}.
%
\sqlfast{} generates the AAT 
in \autoref{lst:aatSecondOrder}, in which
the intruder performs the registration process (1--4) by registering
malicious credentials \texttt{Username(4)} and \texttt{sqli}. At the end of the
registration process (5), the intruder asks for \texttt{requestPage} that makes
the web app send to the database a SQL query with the same parameters the
intruder used in the registration (6--7). In 
(8), the
intruder receives the requested page and the result of the execution of the
injected SQL query performing a \so{}.
\begin{lstlisting}[numbers=left,xleftmargin=2em,framexleftmargin=1.5em,breaklines=true,caption={Abstract attack trace 
for the \so{} case study.},label={lst:aatSecondOrder}]
User -> WebApp: registrationRequest.Username(4).sqli
WebApp -> DB  : query(Username(4).sqli)
DB -> WebApp  : tuple(Username(4).sqli)
WebApp -> i   : registered
i -> WebApp   : requestPage
WebApp -> DB  : query(Username(4).sqli)
DB -> WebApp  : tuple(Username(4).sqli)
WebApp -> i   : page.tuple(Username(4).sqli)
\end{lstlisting}

\begin{SHORT}\vspace*{-0.3cm}\end{SHORT}
\subsection{Concretization phase}
We executed \sqlfast{} on all our case studies 
using a standard laptop (Intel i7 with 8G RAM). The
execution time of the model-checking phase of \sqlfast{} 
ranges from 35 to 45 ms. The overall process (from translation to
concretization) 
takes a few seconds.  In all the cases, we 
generated AATs violating the security property we defined over the model
(authentication bypass or data extraction attack).  Once the AAT has been
generated, \sqlfast{} interactively asks 
the user to provide information
such as the URL of the web app. 
Finally, if
we are concretizing a \sqli{} that exploits an authentication bypass attack a curl
command is showed, whereas sqlmap is used for data extraction \sqli{}.  By
executing the 
traces generated by \sqlfast{}, we 
exploited all the
AATs over the real web app.

\begin{EXT}
\section{Related work}
\label{sec:related}
\end{EXT}
\begin{SHORT}
\section{Conclusions, related work, and future work}
\label{sec:conclusions}

We have presented a formal approach for the representation of \sqli{} and
attacks that exploit \sqli{} in order to violate security properties of
web apps. We have formally defined web apps that interact with a database (that properly replies to queries containing \sqli{}) and an extended DY intruder able to deal with authentication bypass and data extraction attacks related to \sqli{}. We have shown the efficiency of our prototype tool SQLfast
on four real-world case studies (see also~\cite{sqliextended}). \sqlfast{} handles \so{} and detects multi-stage attacks and logical flaws that, to the best of our knowledge, no other tool can handle together, and hardly ever even individually, including the discovery of an attack on Joomla!.
\end{SHORT}	

Many works have proposed
new \sqli{}
techniques and payloads (e.g.,
\cite{Stampar2013,OsandaMalithJayathissa,Damele2009}) 
or formal approaches 
to detect \sqli{} 
(e.g., \cite{Martin2008,Kiezun2009,amnesia,sqli:java}). 
However, to the best of our knowledge, ours is
the first attempt to 
search for 
vulnerabilities based on \sqli{} rather than to 
detect \sqli{}.  
There are, however, a number of works that are closely related to ours and
that are thus worth discussing.

 
\emph{SPaCiTE} is a model-based security testing tool for web apps that
relies on mutation testing~\cite{spacite}. SPaCiTE starts from a secure
ASLan++ specification of a web app and automatically introduces flaws by
mutating the specification. The strength of this approach is the
concretization phase. Starting from an AAT, generated from the mutated
specification using a model-checking phase, SPaCiTE concretizes and tests
the attack trace on the real web app. The major differences with respect to our approach reside in how we model web apps and in particular
those aspects that strictly characterize \sqli{} aspects. The
main goal of the approach in~\cite{spacite} is to find \sqli{} entry
points and concretize them, 
our main goal is to consider \sqli{}
aspects that can be exploited to attack a web app.

Another formal approach that uses ASLan++ and the DY intruder model for
the security analysis of web apps is~\cite{csrf}. In this work, the
authors model a web app searching for CSRF and they do not consider
databases or extensions to the DY model. However, the idea and the
representation of web apps is close to ours and we envision some
potentially useful interaction between the two approaches.

In~\cite{CalviVigano16}, the authors describe the ``Chained Attack''
approach, which considers multiple attacks 
to compromise a web app. The idea is close to ours, but: (i) they consider
a new kind of web intruder, whereas we stick with the DY intruder; (ii)
we analyzed the most common \sqli{} techniques and proposed a
formalization of a vulnerable database, they only consider the
behavior of the web app.

In~\cite{towards}, the authors 
present a model-based method for the
security verification of web apps. They propose a methodology for modeling
web apps and model 5 case studies in Alloy~\cite{alloy}.
Even if the idea is similar to our approach, they have defined three
different intruder models that should find web attacks, whereas we have
used (and extended) the standard DY one. Their AATs are difficult to
interpret because no MSCs are given but state configurations. They have
also considered a number of HTTP details that we have instead 
abstracted away 
in favor of an easier modeling phase.
In contrast, we display AAT as MSCs and we
proposed a concretization phase to obtain the concrete payloads of
\sqli{}.

\begin{EXT}
\section{Conclusions and future work}
\label{sec:conclusions}

We have presented a formal approach for the representation of \sqli{} and
attacks that exploit \sqli{} in order to violate security properties of
web apps. We have formally defined web apps that interact with a database (that
properly replies to queries containing \sqli{}) and an extended DY intruder
able to deal with authentication bypass and data extraction attacks related to
\sqli{}. We have shown the efficiency of our prototype tool SQLfast
on four real-world case studies \begin{SHORT}(see
		also~\cite{sqliextended})\end{SHORT}. \sqlfast{} handles \so{} and detects
multi-stage attacks and logical flaws that, to the best of our knowledge, no
other tool can handle together, and hardly ever even individually, including the
discovery of an attack on Joomla!.
\end{EXT}

%


As future work, we plan to extend the database formalization in order 
to consider \sqli{} that would modify the database state leading to more
complex \sqli{} exploitations. We also plan to analyze other web app vulnerabilities such as
stored/reflected XSS and broken session management, and
investigate synergies between our approach and the one of~\cite{csrf} on CSRF. We will extend our approach to detect
(i) complex concatenations of vulnerabilities (similar to, and more complex than, \cite{homakov}) that lead
to concatenations of attacks, and (ii) articulated paths to vulnerabilities
that would hardly ever be discovered by manual
analysis.

\bibliographystyle{abbrv}
\bibliography{literature}

\begin{thebibliography}{10}

\bibitem{towards}
D.~Akhawe, A.~Barth, P.~Lam, J.~Mitchell, and D.~Song.
\newblock {Towards a Formal Foundation of Web Security}.
\newblock In {\em CSF}, pages 290--304. IEEE, 2010.

\bibitem{apache:htaccess}
{Apache software foundation}.
\newblock {Apache HTTP Server Tutorial: .htaccess files}.
\newblock \url{https://httpd.apache.org/docs/current/howto/htaccess.html}.

\bibitem{avantssar-tacas}
A.~Armando~et al.
\newblock {The AVANTSSAR Platform for the Automated Validation of Trust and
  Security of Service-Oriented Architectures}.
\newblock In {\em TACAS 2012}, LNCS 7214, pages 267--282. Springer, 2012.

\bibitem{avantssar-deliverable-2.3update}
AVANTSSAR.
\newblock {Deliverable 2.3 (update): ASLan++ specification and tutorial}, 2011.
\newblock \url{http://www.avantssar.eu}.

\bibitem{spacite}
M.~B{\"{u}}chler, J.~Oudinet, and A.~Pretschner.
\newblock Semi-automatic security testing of web applications from a secure
  model.
\newblock In {\em {SERE}}, pages 253--262, 2012.

\bibitem{CalviVigano16}
A.~Calvi and L.~Vigan\`o.
\newblock {An Automated Approach for Testing the Security of Web Applications
  Against Chained Attacks}.
\newblock In {\em ACM/SIGAPP SAC}. ACM Press, 2016.

\bibitem{top25}
S.~Christey.
\newblock {The 2009 {CWE/SANS} Top 25 Most Dangerous Programming Errors}.
\newblock \url{http://cwe.mitre.org/top25}.

\bibitem{CVE-2015-7857}
{CVE-2015-7857}.
\newblock \url{https://cve.mitre.org/cgi-bin/cvename.cgi?name=CVE-2015-7857}.

\bibitem{sqli:description}
{CWE}.
\newblock {CWE-89: Improper Neutralization of Special Elements used in an SQL
  Command (`SQL Injection')}.
\newblock \url{https://cwe.mitre.org/data/definitions/89.html}.

\bibitem{Damele2009}
B.~Damele and A.~Guimar\~{a}es.
\newblock {Advanced SQL injection to operating system full control}.
\newblock In {\em BlackHat EU}, 2009.

\bibitem{dolev1983}
D.~Dolev and A.~C. Yao.
\newblock {On the security of public key protocols}.
\newblock {\em IEEE Trans. Inf. Theory}, 29(2):198--208, 1983.

\bibitem{johnny}
A.~Doup\'e, M.~Cova, and G.~Vigna.
\newblock {Why Johnny Can't Pentest: An Analysis of Black-Box Web Vulnerability
  Scanners}.
\newblock In {\em DIMVA}, LNCS 6201, pages 111--131. Springer, 2010.

\bibitem{dvwa}
{Damn Vulnerable Web Application (DVWA)}.
\newblock \url{http://www.dvwa.co.uk}.

\bibitem{sqliphrack}
J.~Forristal.
\newblock {ODBC and MS SQL server 6.5}.
\newblock {\em Phrack}, 8(54), 1998.

\bibitem{thehackernews}
{The Hacker News}.
\newblock \url{http://thehackernews.com/2016/06/oculus-brendan-iribe.html}.

\bibitem{halfond06mar}
W.~G. Halfond, J.~Viegas, and A.~Orso.
\newblock {A Classification of SQL-Injection Attacks and Countermeasures}.
\newblock In {\em SIGSOFT '06/FSE-14}, 2006.

\bibitem{amnesia}
W.~G.~J. Halfond and A.~Orso.
\newblock {AMNESIA: Analysis and Monitoring for NEutralizing SQL-injection
  Attacks}.
\newblock In {\em ASE}, pages 174--183. IEEE, 2005.

\bibitem{homakov}
E.~Homakov.
\newblock {How I hacked Github again}, 2014.
\newblock
  \url{http://homakov.blogspot.it/2014/02/how-i-hacked-github-again.html}.

\bibitem{rfc:basicauth}
{Internet Engineering Task Force (IETF)}.
\newblock {HTTP Authentication: Basic and Digest Access Authentication}, 1999.
\newblock \url{https://www.ietf.org/rfc/rfc2617.txt}.

\bibitem{joomlaattack}
{iSpiderLabs}.
\newblock {Joomla SQL Injection Vulnerability Exploit Results in Full
  Administrative Access}, 2015.
\newblock
  \url{https://www.trustwave.com/Resources/SpiderLabs-Blog/Joomla-SQL-Injection-Vulnerability-Exploit-Results-in-Full-Administrative-Access/?page=1&year=0&month=0}.

\bibitem{alloy}
D.~Jackson.
\newblock {\em Software Abstractions: Logic, Language, and Analysis}.
\newblock MIT Pr., 2012.

\bibitem{OsandaMalithJayathissa}
O.~M. Jayathissa.
\newblock {SQL Injection in Insert, Update and Delete Statements}.

\bibitem{Joomla}
{Joomla!}
\newblock \url{https://www.joomla.org}.

\bibitem{Kiezun2009}
A.~Kie{\.z}un, P.~J. Guo, K.~Jayaraman, and M.~D. Ernst.
\newblock {Automatic creation of SQL injection and cross-site scripting
  attacks}.
\newblock In {\em ICSE}, pages 199--209. IEEE, 2009.

\bibitem{sqli:java}
V.~B. Livshits and M.~S. Lam.
\newblock {Finding Security Vulnerabilities in Java Applications with Static
  Analysis}.
\newblock In {\em USENIX}, pages 18--18, 2005.

\bibitem{Martin2008}
M.~Martin and M.~S. Lam.
\newblock {Automatic Generation of XSS and SQL Injection Attacks with
  Goal-Directed Model Checking}.
\newblock In {\em USENIX}, pages 31--43, 2008.

\bibitem{MySQL}
{MySQL}.
\newblock \url{https://www.mysql.com}.

\bibitem{owasptop10}
OWASP.
\newblock Owasp top 10 for 2013.
\newblock \url{https://www.owasp.org/index.php/Category:OWASP_Top_Ten_Project}.

\bibitem{owasp:sqli}
{OWASP}.
\newblock {SQL Injection}.
\newblock \url{https://www.owasp.org/index.php/SQL_Injection}.

\bibitem{WebGoat}
OWASP.
\newblock {WebGoat Project}.
\newblock \url{https://www.owasp.org/index.php/Category:OWASP_WebGoat_Project}.

\bibitem{PostgreSQL}
{PostgreSQL}.
\newblock \url{http://www.postgresql.org}.

\bibitem{csrf}
M.~Rocchetto, M.~Ochoa, and M.~Torabi~Dashti.
\newblock {Model-Based Detection of CSRF}.
\newblock In {\em IFIP SEC}, pages 30--43. Springer, 2014.

\bibitem{sqlfast}
{SQLfast: SQL Formal AnalisyS Tool}, 2015.
\newblock \url{http://sqlfast.altervista.com}.

\bibitem{sqlmap}
{sqlmap: Automatic SQL injection and database takeover tool}, 2013.
\newblock \url{http://sqlmap.org}.

\bibitem{sqlninja}
{sqlninja: a SQL Server injection \& takeover tool}, 2013.
\newblock \url{http://sqlninja.sourceforge.net}.

\bibitem{Stampar2013}
M.~Stampar.
\newblock {Data Retrieval over DNS in SQL Injection Attacks}.
\newblock \url{http://arxiv.org/abs/1303.3047}, 2013.

\bibitem{atse}
M.~Turuani.
\newblock {The CL-AtSe Protocol Analyser}.
\newblock In {\em RTA}, LNCS 4098, pages 277--286, 2006.

\bibitem{spacios}
L.~Vigan\`o.
\newblock {The SPaCIoS Project: Secure Provision and Consumption in the
  Internet of Services}.
\newblock In {\em ICST}, pages 497--498, 2013.

\bibitem{aslan++}
D.~von Oheimb and S.~M\"odersheim.
\newblock {ASLan++} --- a formal security specification language for
  distributed systems.
\newblock In {\em FMCO}, LNCS 6957, pages 1--22. Springer, 2010.

\end{thebibliography}

\begin{EXT}

\appendix

\section{Correctness of the database formalization}
\label{sec:dyproof}
In this section, we prove that 
the
formalization of the database correctly handles all the \sqli{}s categorized
in \autoref{sec:sqli}.
We remark that tool, specification and 
guidelines show that it is concretely possible to exploit these SQLis.

\begin{theorem}
\label{thm:dysqli}
Let $i$ represent our extension of the DY intruder (cf.~\autoref{sec:dy}) and  
$\mathit{db}$ represent the database. For any message $Q$ such that \texttt{sqli} is a submessage of $Q$ (i.e., $\texttt{sqli} \in Q$), we have:
\begin{center}
\begin{tabular}{ll}
if & $i\rightarrow \mathit{db}: \texttt{sanitizedQuery(}\mathit{Q}\texttt{)}$ or \\
& $i\rightarrow \mathit{db}: \texttt{query(}\mathit{Q}\texttt{)}$\\
then & $\mathit{db} \rightarrow i: \mathit{Response}$
\end{tabular}
\end{center}
where $i \rightarrow db: f(M)$ means that $i$ sends to $\mathit{db}$ a message $M$ through 
the web app, which creates a query from the message $M$ either using a raw query if 
$f=\texttt{query}$, and a sanitized query otherwise, and $\mathit{Response}$ is (according 
to the formalization proposed in \autoref{sec:database}) a variable such that its 
instantiation is either
\texttt{no\_tuple} if \texttt{sanitizedQuery} has been used, or
\texttt{tuple(}$\mathit{Q}$\texttt{)} otherwise, with \texttt{tuple(}$\mathit{Q}$\texttt{)} representing 
the exploitation of a 
\sqli{} $s\in Q$.
\end{theorem}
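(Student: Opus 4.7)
The plan is to proceed by case analysis on the wrapping function symbol ($\texttt{sanitizedQuery}$ or $\texttt{query}$) applied to $Q$, following the two mutually-exclusive branches of the database entity in \autoref{lst:databaseEntity}. In both cases the goal is to read off the value of $\mathit{Response}$ from the conditional structure of that listing and then justify that this value faithfully represents the effect of some \sqli{} $s \in Q$ in the sense of \autoref{sec:sqli}.

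First I would dispatch the sanitized case $i\rightarrow \mathit{db}\colon \texttt{sanitizedQuery(}Q\texttt{)}$. Here the first branch of \autoref{lst:databaseEntity} fires. By the modeling convention of \autoref{sec:database}, sanitized parameters are treated as data, so the $\texttt{sqli}$ submessage is never interpreted as SQL syntax; the only reply the intruder can obtain on this channel is the constant $\texttt{no\_tuple}$ delivered on line~2, matching the theorem and also recording the informal fact that prepared statements neutralise \sqli{}. Next I would handle the raw case $i\rightarrow \mathit{db}\colon \texttt{query(}Q\texttt{)}$, which triggers the second branch. Since $\texttt{sqli} \in Q$, the query matches the pattern $M.\texttt{sqli}$ of the Horn clause $\texttt{inDB(M.sqli)} \Rightarrow \mathit{true}$; therefore $\texttt{inDB(}Q\texttt{)}$ holds and line~4 produces $\mathit{Response} = \texttt{tuple(}Q\texttt{)}$, exactly as required.

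The hard part is not the case analysis itself but the second half of the theorem's claim, namely that the abstract response $\texttt{tuple(}Q\texttt{)}$ really represents the exploitation of a \sqli{} $s \in Q$ for \emph{every} category of \autoref{sec:sqli}. I would discharge this by walking through the six techniques and, for each one, exhibiting a concrete payload template whose abstraction under our symbolic model is $M.\texttt{sqli}$ and whose observable effect is precisely the intruder learning $\texttt{tuple(}Q\texttt{)}$ (which, by design, is the wildcard for any database content). Concretely, \bbb{} and \tb{} collapse into the same abstract trace because timing is not modeled; \eb{} and \uq{} are covered directly, since both return attacker-visible data on the same channel; \so{} is witnessed by the \emph{second-stage} use of a previously-stored $M.\texttt{sqli}$ rather than by the initial insertion, so the trace only needs two successive entries of the $\texttt{query}$ branch firing on the same stored parameter; and \sq{} is subsumed by the others via the concluding remark of \autoref{sec:sqli}. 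Chaining this catalogue of concrete-to-abstract correspondences with the abstract response obtained from the case analysis yields the theorem.
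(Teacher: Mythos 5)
Your core argument is the paper's own proof: a two-way case split on the wrapping function, where the raw-query branch uses the Horn clause $\texttt{inDB(M.sqli)} \implies true$ to make the guard in line~4 of \autoref{lst:databaseEntity} fire and return \texttt{tuple(}$Q$\texttt{)}, and the sanitized branch yields nothing exploitable. Two divergences are worth flagging. First, in the sanitized case you assert that \texttt{no\_tuple} is delivered by line~2; but line~2 only answers when $\texttt{SQLquery == tuple(*)}$, and the paper's proof explicitly argues that this guard \emph{fails} because the parameter contains \texttt{sqli}, so nothing at all is returned. (The theorem statement itself says \texttt{no\_tuple} here, so the paper is internally loose on this point, but your reading of the listing is the one that does not match the code; the conclusion that no \sqli{} goes through \texttt{sanitizedQuery} survives either way.) Second, your third paragraph --- the category-by-category argument that \texttt{tuple(}$Q$\texttt{)} adequately abstracts each technique of \autoref{sec:sqli}, with \bbb{} and \tb{} collapsing, \so{} witnessed by the second firing of the \texttt{query} branch, and \sq{} subsumed --- is not in the paper's proof at all: the paper stops at ``a \sqli{} has been exploited'' and delegates concrete exploitability to the tool and to the remark preceding the theorem. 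Your plan for that extra step is sensible and arguably makes the theorem mean more than the paper's version does, but as written it is a programme rather than a proof: you would need to actually exhibit, for each category, the concrete payload and the abstraction map to $M.\texttt{sqli}$ rather than promise to do so.
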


\begin{proof}
\label{proof:dysqli}
We have extended the DY intruder with \texttt{sqli} to represent the payload that the intruder can use to send a \sqli{} to the web app and then to the database.

Given that $\texttt{sqli} \in Q$,  we only have two subcases: the query $Q$ is  either (i) 
a sanitized query or (ii) a raw query.
\begin{itemize}
\item[(i)] Line 1 of the database (whenever in the proof we refer to a line of the database we are referring to \autoref{lst:databaseEntity}) accepts the incoming message and returns the constant \texttt{no\_tuple} if $Q$ 
contains \emph{only} a tuple of the database (line 2). In this case, \texttt{SQLquery!=tuple(*)} because contains \texttt{sqli} 
and nothing is returned (i.e., a \sqli{} is detected). Correctly, no \sqli{} can be performed in this case.
\item[(ii)] Line 3 of the database accepts the incoming message and, given
that the message contains a \sqli{}, the predicate \texttt{inDB(Q.sqli)} is
true (due to the Horn clause $\texttt{inDB(M.sqli)} \implies true$, defined in \autoref{sec:database}), and 
the guard in line 4 
is also true. The database then replies with \texttt{tuple(Q.sqli)} and a \sqli{} 
has been exploited.
\end{itemize}

\end{proof}


\section{ASLan++ formalization}
\label{sec:aslan}
Although in the body of the paper we have shown only intuitive pseudo-code, we 
have actually implemented our formalizations using ASLan++. 
In \autoref{sec:aslanintro}, we give a brief overview of ASLan++~\cite{aslan++} 
and  ASLan~\cite{avantssar-deliverable-2.3update}, which are used by the 
AVANTSSAR Platform~\cite{avantssar-tacas} and the SPaCIoS Tool~\cite{spacios} 
for the formal and automated analysis at design-time and validation at run-time 
of security-sensitive protocols and web apps and their security
goals.
We focus on the aspects that we use for modeling our case studies of
\autoref{sec:casestudiesandresults}. In \autoref{sec:skeleton}, we describe an
ASLan++ specification skeleton that encapsulates all the aspects considered in
\autoref{sec:formalization} and then expand this to full specifications of the
case studies in \autoref{sec:aslan-casestudies}.


\subsection{ASLan++ and ASLan}
\label{sec:aslanintro}

ASLan++ is a formal and typed security protocol specification language, whose
semantics is defined in terms of ASLan, which we describe below. Similar to
object-oriented languages, an ASLan++ specification consists in a hierarchy of
\emph{entity declarations}, which are similar to Java classes.

The top-level entity is usually called \emph{Environment} (similar to the
``main'' procedure of a program) and it typically contains the definition of a
\emph{Session} entity, which in turn contains a number of subentities (and
their instantiations, i.e., \texttt{new subentity(<parameters>);}) that
define the main principals involved in the system (e.g., web app, database).
Each subentity defines the internal behavior of the component it models and
the interaction with other entities. We won't go into the full details of the
ASLan++ language~\cite{avantssar-deliverable-2.3update}, but in the following
we discuss in itemized form the main aspects that we use to give the ASLan++
code of our formalizations.
\begin{itemize}
\item ASLan++ supports variables (capital letter symbols), constants (lower
case symbols), functions and predicates that can be distinguished by the
return type, message and fact (described afterwards) respectively.

\item An entity is composed by two main sections: 
\begin{itemize}
\item \texttt{symbols}, in which types of local symbols are declared and
instantiations are given of all the variables and constants used in the
entity, and
\item \texttt{body}, where the behavior of the entity is defined. The instantiation of an entity must be done in the body of the parent-entity using the \texttt{new} keyword as follows: \texttt{new subentity(<params>)}.
\end{itemize}
Inside the body of an entity we use three different types of statements:
assignments, message send and message receive.
The \emph{assignment} \texttt{Var:=fresh()} assigns to the variable \texttt{Var} a new constant of the proper type, while \texttt{Var:=Var'} is the usual assignment. 

\item Sending and receiving of messages are expressed in Alice-and-Bob
notation: \texttt{A -> B: M;}, where \texttt{A} and \texttt{B} are entities
and \texttt{M} a message. More formally, a \emph{message send} statement,
\texttt{Snd -> Rcv: M}, is composed by two variables \texttt{Snd} and
\texttt{Rcv} representing sender and receiver, respectively, and a message
\texttt{M} exchanged between the two parties. In \emph{message receive},
\texttt{Snd} and \texttt{Rcv} are swapped and usually, in order to assign a
value to the variable \texttt{M}, a \texttt{?} precedes the message
\texttt{M}, i.e., \texttt{Snd -> Rcv: ?M}. However, in ASLan++, the
\texttt{Actor} keyword refers to the entity itself (similar to ``this'' or
``self'' in object-oriented languages) and thus we actually write the send and
receive statements as \texttt{Actor -> Rcv: M} and \texttt{Snd -> Actor: ?M}
respectively. Note that \texttt{?} acts as a wildcard if it is not followed by any variable
(e.g., \texttt{Snd -> Actor: ?}) since no specific pattern of the receiving 
message is expected.

\item It is possible to use different kinds of channels: \texttt{->},
\texttt{*->}, \texttt{->*} and \texttt{*->*}, which define insecure,
authentic, confidential and secure channels, respectively. In ASLan++, there
are different types of channel models but we only use the Cryptographic
Channel Model (CCM).

\item A section \texttt{clauses} defines Horn clauses of the form:
\texttt{HCname: head :- body}.

\item In the \texttt{goals} section, one can specify LTL goals like \texttt{[]predicate(Var)} stating that a particular predicate must always hold over a variable \texttt{Var}.

\item The two statements \texttt{if(<guard>)\{positive branch\} else
\{negative branch\}} and \texttt{select\{on:\{<guard>\}\} \{positive branch\}}
are equivalent but select-on does not provide the negative branch.

\item \texttt{while(<guard>)} loops in ASLan++ are used to define processes
waiting for incoming messages (e.g., our database formalization).

\item Assignments are of the form \texttt{M:=m} where \texttt{M} is a variable and \texttt{m} is a constant.

\item There are various types in ASLan++, e.g.:
\texttt{agent} for entities involved in the communication, \texttt{message} for anything that can pass through the network, \texttt{text} for atomic messages (i.e., messages that do not contain concatenations, whereas \texttt{message} can contain a concatenation of \texttt{text}s), and the Boolean type \texttt{fact} used for predicates. ASLan++ supports also functions and sets whose types can be anything but \texttt{fact}.

\item The main ASLan++ keywords are: \texttt{Actor}, \texttt{?},
\texttt{iknows()}, which is a predicate that represents the intruder
knowledge (it stands for \emph{intruder knows}), and \texttt{fresh()}, which is a function that generates a fresh constant value that one can assign to a variable by \texttt{M:=fresh()}.

\item Finally, we describe here the only type of ASLan++ goal we have used in
our examples, i.e., \texttt{[](!iknows(M))} where \texttt{[]} is the
globally LTL (Linear Temporal Logic) operator, \texttt{!} expresses a
negation, \texttt{iknows} is as above and \texttt{M} (in our examples)
is a constant or the \texttt{tuple()} function. Informally, the goal states
that, in a safe ASLan++ specification, \texttt{M} will never be known by
the intruder, otherwise an attack is found.

\end{itemize}


An ASLan++ specification can be automatically translated (see~\cite{avantssar-tacas})
into a more low-level ASLan specification, which ultimately defines a
transition system $M=\langle \mathbf{S},\mathbf{I},\rightarrow \rangle$,
where $\mathbf{S}$ is the set of states, $\mathbf{I} \subseteq \mathbf{S}$ is the
set of initial states, and $\rightarrow \subseteq \mathbf{S}\times \mathbf{S}$ is
the (reflexive) transition relation. 
The structure of an ASLan specification is composed by six different sections:
signature of the predicates, types of variables and constants, initial state, Horn clauses,
transition rules of $\rightarrow$ and protocol goals. 
The content of the sections is intuitively described by their names. In particular, an initial state $I\in\mathbf{I}$ 
is composed by the concatenation of all the 
predicates that hold before running any rule
(e.g., the agent names and the intruder's own keys).
The transition relation $\rightarrow$ is defined as follows.
For all $S\in \mathbf{S}$, 
$S\rightarrow S'$ iff there exist a rule
such that  
$$
PP.NP\&PC\&NC\ifarrow[V]R
$$
(where $PP$ and $NP$ are
sets of positive and negative predicates, $PC$ and $NC$ conjunctions
of positive and negative atomic conditions) and a substitution 
$\gamma :\{v_1,\ldots,v_n\}\rightarrow T_\Sigma$
where $v_1,\ldots,v_n$ are the variables that occur in $PP$ and $PC$
such that:
(1) $PP\gamma\subseteq \lceil S\rceil^H$, where $\lceil S\rceil^H$
is the closure of $S$ with respect to the set of clauses $H$, (2)
$PC\gamma$ holds,
(3) $NP\gamma\gamma' \cap \lceil S\rceil^H = \emptyset$ for all substitutions $\gamma'$ such that $NP\gamma\gamma'$ is ground,
(4) $NC\gamma\gamma'$ holds for all substitutions $\gamma'$ such that $NC\gamma\gamma'$ is ground and (5)
$S'=(S\setminus PP\gamma)\cup R\gamma\gamma''$, where $\gamma''$ is any substitution such that for all $v \in V$, $v\gamma''$ does not occur in $S$.

We now define the translation of the ASLan++ constructs we have considered here.
Every ASLan++ entity is translated into a new \emph{state predicate} and
added to the section signature. This predicate is parametrized with respect
to a \emph{step label} (that uniquely identifies every instance) and
it mainly keeps track of the local state of an instance (current
values of whose variables)
and expresses the control flow of the entity by means of
step labels. 
As an example, if we have the ASLan++ entity 
\begin{lstlisting}[breaklines=true,numbers=none]
entity Snd(Actor, Rcv: agent){
 symbols
    Var: message;
}
\end{lstlisting}
then the predicate \texttt{stateSnd} is added to the section signature and, supposing an instantiation of the entity \texttt{new Snd(snd, rcv)}, the new predicate 
\texttt{state\_Snd(snd, iid, sl\_0, rcv, dummy\_message)}
is
used in transition rules to store all the informations of an entity,
where the ID \texttt{iid} identifies a particular instance, 
\texttt{sl\_0} is the step label, the parameters \mbox{\texttt{Actor},}
\texttt{Rcv} are replaced with constants \texttt{snd}
and \texttt{rcv}, respectively, and the message variable \texttt{Var}
is initially instantiated with \texttt{dummy\_message}.

Given that an ASLan++ is a hierarchy of entities, when an entity is
translated into ASLan, this hierarchy is preserved by a
\texttt{child(id\_1, id\_0)} predicate that states \texttt{id\_0}
is the parent entity of \texttt{id\_1} and both \texttt{id\_0}
and \texttt{id\_1} are entity IDs.

A variable assignment statement is translated into a transition rule 
inside the rules section.
As an example, if in the body of the entity \texttt{Snd} defined above
there is an assignment \texttt{Var := constant;} where 
\texttt{constant} is
of the same type of \texttt{Var}, then we obtain the following transition rule:
\begin{lstlisting}[numbers=none]
state_Snd(Actor,IID,sl,Rcv,Var)
=>
state_Snd(Actor,IID,succ(sl),Rcv,constant)
\end{lstlisting}
In the case of assignments to  
\texttt{fresh()},
the variable \texttt{Var} is assigned to a new variable.

In the case of a message exchange (sending or receiving statements),
the \texttt{iknows(message)} predicate 
is added to the right-hand side of the corresponding ASLan rule.
This states that the message \texttt{message} has been sent
over the network and \texttt{iknows} is used because, as is usual, the Dolev-Yao intruder is identified with the network itself.

The last point we describe is the translation of goals focusing 
only on the LTL goal we have used in our case studies.
Goals are translated into attack states containing the negation of the 
argument of the LTL operator:

\lstset{numbers=none}
\begin{lstlisting}
attack_state authorization :=
	iknows(M)
\end{lstlisting}

More information on ASLan, ASLan++ and the AVANTSSAR Platform can be found
in~\cite{avantssar-deliverable-2.3update,avantssar-tacas}.

\subsection{ASLan++ skeleton}
\label{sec:skeleton}
We now present a general ASLan++ specification that contains all the aspects
described in \autoref{sec:formalization}. We start by describing the first
part of the skeleton given in \autoref{lst:dbSymbols}, where we specify
agents, variables, constants, facts and uninterpreted functions used in the
overall specification (i.e., the entity \texttt{Environment}).

\lstset{numbers=left,xleftmargin=2em,framexleftmargin=1.5em}
\begin{lstlisting}[breaklines=true,caption={ASLan++ code of the symbols used in the skeleton of 
	the web app.},label={lst:dbSymbols}]
specification SpecificationSkeleton 
channel_model CCM
entity Environment {
symbols
 %entities involved in the communication
 webapp, database: agent;
 %DBMS
 nonpublic inDB(message): fact;
 nonpublic sanitizedQuery(message):message;
 nonpublic query(message): message;
 nonpublic tuple(message): message;
 nonpublic no_tuple: text; 
 %sql injection payload
 sqli: text; 
 nonpublic dashboard: text;
 nonpublic secureFolder: text;	
clauses 
 db_hc_ev(M): inDB(M.sqli);
\end{lstlisting}

Lines 1--3 begin the specification by stating a name (in this case,
\texttt{SpecificationSkeleton}) and the channel model used
(\texttt{CCM}), and by introducing the outermost entity\\
(\texttt{Environment}).
The symbols section of the environment begins in line 4, where we define the
constants representing agents involved in the specification (line 6). We only
need to represent the web app and the database but no client as we
defined in \autoref{sec:formalization}.
In lines 8--12, we define predicates, uninterpreted functions and constants as described in \autoref{sec:database}. 
They are defined as \texttt{nonpublic} to exclude them from the
initial knowledge of the DY intruder.
In line 14, we define the payload of \autoref{sec:dy} that the DY
intruder can use to perform \sqli{}. The constant \texttt{sqli} 
is public in order to add it to the DY intruder initial knowledge.
In lines 15--16, we define two constants representing two recurring
components of web apps: \texttt{dashboard}, which represents a user
administration page, \texttt{secureFolder}, which represents any secure
folder (or page) in a web app.
In line 17, the keyword \texttt{clauses} opens the Horn clauses section in
which the Horn clause in line 18 represents \texttt{db\_hc\_ev}
of \autoref{sec:database}.

Our skeleton considers two main subentities, defining the web app
(\autoref{lst:webAppEntity}) and the database
(\autoref{lst:dbEntity}).
These two entities are subentities of \texttt{Session} as showed in line 1 of
\autoref{lst:webAppEntity}, but we will come back to the session entity later
in this section.

\begin{lstlisting}[caption={ASLan++ code of the web app entity.},label={lst:webAppEntity}]
entity Session(Webapp, Database:agent){
 entity Webapp(Actor, Database:agent){
  symbols
   %all the symbols used in the body of this
   %entity in the body below
  body{ %write the behavior of the web app.
   %Every time a message M is sent to the 
   %database do not forget to add a nonce   
   %to avoid spurious replay attacks,
   %e.g., Actor *->* database: M.nonce;
   %the same for message received,
   %e.g., database *->* Actor: ?M.?Nonce }}
\end{lstlisting}

The \texttt{Webapp} entity has an empty \texttt{body} as it has
no fixed structure and depends on the particular app one is 
modeling; we add it with comments to help the modeling phase. As in any
other ASLan++ entity, the \texttt{symbols} section collects the type
definition of variables, constants and eventually functions or predicate used
in section \texttt{body}, where the behavior of the web app must be
specified following the formalization in \autoref{sec:webapp}.

The \texttt{Database} entity follows (almost verbatim) the pseudo-code in
\autoref{lst:databaseEntity}. The while loop in line 6 wraps the entire body
content. This defines that the database is listening for incoming
communications that match one of the guards of the two select-on in lines 9
and 14. The other two main differences are the usage of select-on instead of
an if statement and the introduction of nonces. The first is due to technical
reasons: when the ASLan++ specification is translated into a transition
system, the semantics of select-on with respect to an if statement saves one
or more transitions (thanks to the absence of the negative branch). The
introduction of nonces in any in/out-going message is useful to avoid
spurious man-in-the-middle attacks between the web app and the database in
accordance with \autoref{sec:webapp}.

\begin{lstlisting}[breaklines=true,caption={ASLan++ code of the database entity.},label={lst:dbEntity}]
entity Database(WebApp, Actor: agent){
symbols
 NonceWA,NonceDB: text;
 SQLquery: message;
body{
 while(true){
  select{ 
   on(WebApp *->* Actor: ?NonceWA.sanitizedQuery(?SQLquery)):{
    select{on(SQLquery = tuple(?)):{
     NonceDB := fresh();
     Actor *->* WebApp: no_tuple.NonceDB; } } }
   on(WebApp *->* Actor: 
   	?NonceWA.query(?SQLquery)):{
    select{
     on(inDB(SQLquery)):{
      NonceDB := fresh();
      Actor *->* WebApp: tuple(SQLquery).NonceDB; }
     on(!(inDB(SQLquery))):{
      NonceDB := fresh();
      Actor *->* WebApp: no_tuple.NonceDB; }}}}}}}
\end{lstlisting}

As we mentioned, both the web app and the database are
subentities of the session entity. In our skeleton, the only meaning of the
session entity is to instantiate its subentities with the proper constants
(defined in \autoref{lst:dbSymbols}) as showed in \autoref{lst:sessionBody}.

\begin{lstlisting}[breaklines=true,caption={ASLan++ code of the Session body.},label={lst:sessionBody}]
body{ new Webapp(webapp, database);  
      new Database(webapp, database); }
\end{lstlisting}

Goals (i.e., security properties that we want to check) 
are defined 
in \autoref{lst:aslanGoals}. They are a verbatim copy of the ones 
described in \autoref{sec:goals} but we use \texttt{iknows} instead of
\texttt{iknowledge}.

\begin{lstlisting}[caption={ASLan++ code for authentication and data extraction
	attacks.},label={lst:aslanGoals}]
goals
 authentication: [](!(iknows(dashboard))); 
 data_extraction: [](!(iknows(tuple(?)))); 
\end{lstlisting}

At the end of the specification, the \texttt{Environment} entity is
instantiated as in \autoref{lst:environmentBody}.

\lstset{numbers=left,xleftmargin=0em,framexleftmargin=0em}
\begin{lstlisting}[numbers=none,caption={ASLan++ code of the Environment body.},label={lst:environmentBody}]
body{ new Session(webapp, database); }
\end{lstlisting}

\section{ASlan++ Case Studies}
\label{sec:aslan-casestudies}
In this section, we give the ASLan++ code that implements the behavior of the
case studies presented in~\autoref{sec:casestudiesandresults} long with two
case studies coming from WebGoat. The ASLan++ of this section will fill the
empty spaces of the ASLan++ skeleton of~\autoref{sec:skeleton}.

\subsection{Authentication Bypass (WebGoat)}
\label{sec:aslanpp:webgoat}
We used Lesson \emph{Stage 1: String SQL injection} from WebGoat which implements a
common login scenario for authenticating users through username and password.
The web app receives credentials (\texttt{Employeeid} and \texttt{Password})
along with the \texttt{IP} address of the client (1).
A nonce \texttt{NonceWA} is generated (2) to ensure a fresh communication (avoiding spurious 
 replay
attacks) with the database and then a query with the provided credentials is sent to the
database entity (3).
The web app waits to receive an answer from the database (5-6) and, if some tuples
are received (6), 
the users list page (\texttt{usersList}) is returned
to the client along with the \texttt{tuple()} function (7).
If no tuples are generated (10), then the login page is shown (11).

\lstset{numbers=left,xleftmargin=2em,framexleftmargin=1.5em}
\begin{lstlisting}[breaklines=true,caption={ASLan++ code representing
the WebGoat Stage 1 scenario,  where
\texttt{Actor} refers to the web application.},label={lst:aslanpp:authby}]
? ->* Actor: ?IP.?Employeeid.?Password;
NonceWA:=fresh();
Actor *->* Database: NonceWA.query(Employeeid.Password);

select{
on(Database *->* Actor: tuple(?SQLquery).?NonceDB):{
 Actor ->* IP: usersList.tuple(SQLquery);
}

on(DB *->* Actor: no_tuple):{
 Actor ->* IP : login;
}}
\end{lstlisting}

As goal~(\autoref{lst:aslanpp:bbbgoal}), we check if there exists an execution in which
the intruder can obtain access to the \texttt{usersList} constant (representing 
the users list page) without knowing the correct credentials.

\begin{lstlisting}[breaklines=true,numbers=none,caption={Authentication goal.}
,label={lst:aslanpp:bbbgoal}]
[](!(iknows(usersList)));
\end{lstlisting}

\subsection{Data Extraction (WebGoat)}
Consider again the WebGoat lesson specified in~\autoref{sec:aslanpp:webgoat}.
We can use it again to show how to represent a data extraction attack. The
behavior of the web application does not change and is the same given
in~\autoref{lst:aslanpp:authby}.
As goal (\autoref{lst:aslanpp:dataleak}), we check if there is a way for the
intruder to gain knowledge of something which is function on \texttt{tuple()}.

The AAT obtained is identical to the one already given
in~\autoref{lst:aslanpp:authby} for authentication bypass. Not surprising,
whenever there is an injection point, the intruder can modify the behavior of
the SQL query by using different payloads to obtain different results. In our
model, we do not consider concrete payloads and then the attack execution is
the same. What allows us to distinguish between the two cases is the goal
itself, which gives us information about the intruder's intention.

\begin{lstlisting}[breaklines=true,numbers=none,caption={Data extraction goal}
,label={lst:aslanpp:dataleak}]
[](!(iknows(tuple(?)));
\end{lstlisting}

\subsection{Authentication bypass via Data Extraction}
The ASlan++ model representing the \joo{} case study in~\autoref{sec:authdata}
is given in~\autoref{lst:aslanpp:joomlamodel}.  A remote user, browsing the web
app, visits the Content History component (line 1).  The web app generates the
nonce \texttt{NonceWA} (2) to ensure a fresh communication (avoiding
spurious replay attacks) with the database and then sends a query with the user
supplied data (3).  The web application waits for a response from the database
(5).  If some tuples are generated (5), the web application sends back to the
client the history page along with the \texttt{tuple()} function (6).
The web application then has two possible ways of authenticating the user
(9--27): by using credentials or cookies. 
If username and password (12) are provided,
the web application applies a non-invertible hash function \texttt{hash()}
to the password, and queries the database to verify the credentials (13).
If the credentials are correct (14), the administration panel is
sent 
to the user (15).
In case of a cookie session (20), the user provides a cookie 
that the web application 
checks querying the database (21). If the cookie is valid (22), the administration panel is
sent back to the user (23).

\begin{lstlisting}[breaklines=true,caption={ASLan++ code representing the \joo{} scenario,  where \texttt{Actor} refers to the web application.},
label={lst:aslanpp:joomlamodel}]
? ->* Actor: ?IP.(com_contenthistory.history).?Listselect;
NonceWA := fresh();
Actor *->* Database: NonceWA.query((com_contenthistory.history).Listselect);

select{on(Database *->* Actor: tuple(?SQLquery).?NonceDB):{
    Actor ->* IP: viewHistory.tuple(SQLquery);
}}

% attempt to access a restricted area
select{
  % we are providing correct credentials
  on(IP ->* Actor: username.?Username.password.?Password):{
    Actor *->* Database: NonceWA.sanitizedQuery(Username.hash(Password));
    select{on(Database *->* Actor: no_tuple.?NonceDB ):{
          Actor ->* IP: adminPanel;
      }
    }
  }
  % we are providing a valid session
  on(IP ->* Actor: cookie.?Cookie):{
    Actor *->* Database: NonceWA.sanitizedQuery(Cookie);
    select{on(Database *->* Actor: no_tuple.?NonceDB):{
          Actor ->* IP: adminPanel;
      }
    }}

 }%end select
}%end body
\end{lstlisting}

As goal (\autoref{lst:aslanpp:joomlagoal}), we check if there exists an execution in
which the intruder can access the administration panel represented by the
constant \texttt{adminPanel}.

\begin{lstlisting}[breaklines=true,numbers=none,caption={Authentication bypass for the \joo{} scenario.}
,label={lst:aslanpp:joomlagoal}]
[](!(iknows(adminPanel)));
\end{lstlisting}

\subsubsection{\chained{}}
The ASlan++ model representing the \chained{} case study
in~\autoref{sec:authdata} is given in~\autoref{lst:aslanpp:chained}.
The client sends his personal credentials (\texttt{Username} and
\texttt{Password}) along with his IP address (\texttt{IP}) to the web application (1). The
web application generates the nonce \texttt{NonceWA} (2) to ensure a fresh
communication (avoiding spurious replay attacks) with the database and then creates a
query that it sends to the database (3). The web application waits for a
response of the database with a tuple (4--5). If tuples are generated from the
database (5), a dashboard page is returned to the client along with the tuple
(6). If no tuples are generated (7), the web application redirects the user to
the login page (8). At this point, the web application waits to receive
correct credentials (11) for the HTTP basic authentication that will allow the client to 
access the secure
folder \texttt{secureFolder} (12). Given that the credentials are the same as
the ones stored in the database, we can model them as a function of
\texttt{tuple()}. We recall, as discussed in \autoref{sec:database}, that
\texttt{tuple()} represents an abstraction of any data that can be extracted
from the database. This means that whenever a web application requires any
data that is in the domain of the database, we can write them as a function of
\texttt{tuple()}.

\begin{lstlisting}[breaklines=true,caption={ASLan++ code representing the \chained{}
				scenario, where \texttt{Actor} refers to the web application.},
label={lst:aslanpp:chained}]
? ->* Actor: ?IP.?Username.?Password;
NonceWA := fresh();
Actor *->* Database: NonceWA.query(Username.Password);
select{
 on(Database *->* Actor: tuple(?SQLquery).?NonceDB):{
    Actor ->* IP:dashboard.tuple(SQLquery);
 on(Database *->* Actor: no_tuple.?NonceDB){
    Actor ->* IP: login;
 }
}}
select{on(IP ->* Actor: tuple(?)):{
 Actor ->* IP: secureFolder;
}}
\end{lstlisting}

As a goal (\autoref{lst:aslanpp:chainedGoal}), we ask if there is an execution in which the intruder reaches \texttt{secureFolder}.
\begin{lstlisting}[breaklines=true,numbers=none,caption={Authorization goal}
,label={lst:aslanpp:chainedGoal}]
[](!(iknows(secureFolder)));
\end{lstlisting}

\subsection{Second-Order \sqli{} (SO)}
The ASlan++ model representing the case study in~\autoref{sec:secondorder}
is given in~\autoref{lst:aslanpp:secondOrder}.
The client sends a registration request together with his IP address (\texttt{IP}) and his 
personal
credentials (\texttt{Username} and \texttt{Password}) to the web application
(1). The web application generates a nonce \texttt{NonceWA} (2) to ensure a fresh communication with the 
database (to avoid spurious replay attacks) and sends a query containing the client's credentials 
along with the nonce to the database (3). The web application waits until it receives a 
response from the database containing the data resulting from the execution of the query 
\texttt{tuple(?SQLquery)} submitted by the web application (5).
The registration process is now completed and the web application sends back to the 
client the page \texttt{registered} (6). Here, the web application does not forward 
\texttt{tuple()} back to the client because the registration query is an \texttt{INSERT} (see 
\autoref{sec:database}).
The client asks for a page (8) and the web application uses previously submitted values of 
\texttt{Username} and \texttt{Password} to execute a new SQL query (9). Here is where the 
\so{} takes place; in fact, the variables embedded in the query in (9)
will trigger a \so{}.
The database executes the query and sends back the results to the web
application (11). Finally, (12), the web application sends to the client the requested page 
and the \texttt{tuple()} (the query submitted by the web application is a \texttt{SELECT}).

\begin{lstlisting}[breaklines=true,caption={ASLan++ code representing
a web application vulnerable to a \so{} attack where \texttt{Actor} refers to the web application.},
label={lst:aslanpp:secondOrder}]
? ->* Actor: registrationRequest.?IP.?Username.?Password;
NonceWA := fresh();
Actor *->* Database: NonceWA.query(Username.Password);

select{ on(Database *->* Actor: tuple(?SQLquery).?NonceDB):{
  Actor ->* IP: registered;
  
  IP *->* Actor: requestPage;
  Actor *->* Database: NonceWA.query(Username.Password);
  
  Database *->* Actor: tuple(?SQLquery).?NonceDB;
  Actor ->* IP: page.tuple(SQLquery);
}}
\end{lstlisting}

As goal (\autoref{lst:aslanpp:secondOrderGoal}), we ask if there is an execution in which the intruder can interact with the web application until he obtains data from the database, i.e.
\begin{lstlisting}[breaklines=true,numbers=none,caption={Database data leakage goal.},
,label={lst:aslanpp:secondOrderGoal}]
[](!(iknows(tuple(?))));
\end{lstlisting}

\subsection{Concretization phase}
\label{app:concretization}
We give the the output provided by \sqlfast{} with respect to the \joo{} case
study in~\autoref{lst:concretizationjoomla}.

\begin{lstlisting}[numbers=left,xleftmargin=2em,framexleftmargin=1.5em,breaklines=true,caption={
Concretization of the \sqli{} in the AAT \autoref{lst:joomlaaat}.},
label={lst:concretizationjoomla}]
Just a couple of questions.
What's the name of the web app in the ASLan++ specification? [DEFAULT webapplication, hit enter twice for default]

What's the name of the database in the ASLan++ specification? [DEFAULT database, hit enter twice for default]

Can you give me the URI of the web page under test corresponding to:
i -> WebApp: i.Username(4).sqli
0 - <?> ->* webapplication : com_contenthistory.history.sqli
http://target.com/joomla3.4.4/index.php?list[select]=?&view=history&option=com_contenthistory
----------------
0 - <?> ->* webapplication : com_contenthistory.history.sqli
http://target.com/joomla3.4.4/index.php?list[select]=?&view=history&option=com_contenthistory

Data extraction command
Are there any POST parameters (key=value)? [optional, press enter to skip]

sqlmap.py -u "https://157.27.244.25/joomla3.4.4/index.php?list[select]=?&view=history&option=com_contenthistory" -a
----------------
\end{lstlisting}

\sqlfast{} first asks to the modeler the names of the ASLan++ entities
representing the web application and the database (1-5). Then, \sqlfast{}
automatically detects the \sqli{} in the AAT and asks for the URL
corresponding to the message in the attack trace (lines 6-9). Once 
provided, \sqlfast{} asks whether there are any POST parameters for performing
the request to the give URL (15-16) and finally generates the appropriate 
command that need to be executed (17).

\end{EXT}
\end{document}